\newtheorem{lemma}{Lemma}
\newtheorem{theorem}[lemma]{Theorem}
\newtheorem{definition}{Theorem} 
\author{\IEEEauthorblockN{Qi Duan}\\
\IEEEauthorblockA{Carnegie Mellon University}
\\
\IEEEauthorblockN{Ehab Al-Shaer}\\
\IEEEauthorblockA{Carnegie Mellon University}
}
\title{Outsourcing SAT-based Verification Computations in Network Security }
\begin{document}
\maketitle

\begin{abstract}
The emergence of cloud computing gives huge impact on large computations. Cloud 
computing platforms offer servers with large computation power to be available for customers. These servers can be used efficiently to solve problems that are complex by nature, for example, satisfiability (SAT) 
problems. 
Many practical problems can be converted to SAT, for example, 
circuit verification and network configuration analysis.
However, outsourcing SAT instances to the servers may cause data leakage that can jeopardize system's security. Before
 outsourcing the SAT instance, one needs to hide the input information. 
One way to preserve privacy and hide information is to randomize the SAT  
 instance before outsourcing. In this paper, we present 
multiple novel methods to randomize SAT instances. We present a novel method to randomize the SAT instance, 
a variable 
randomization method to randomize the solution set, and methods to randomize
Mincost SAT and MAX3SAT instances. Our analysis and evaluation show the 
correctness and feasibility of these randomization methods. The scalability and generality 
of our methods make it applicable for real world problems. 
\end{abstract}




\section{Introduction}
\label{sec:intro}
\subsection{Motivation}
Cloud computing allows consumers and businesses to use applications and store 
large amount of data in cloud servers across the internet. It allows for much more efficient 
computing by centralizing storage, computational power and bandwidth.
It is convenient to outsource expensive computational tasks to cloud servers.

The main problem that makes users reluctant to outsource their computation is 
privacy preserving. Outsourcing computation may leak sensitive data that can 
put user's security on risk. Therefore, a user needs to ensure
 that the 
 data is secured before outsourcing. One way to solve the problem is to randomize the problem before
outsourcing. However, one needs to make sure that the randomization can be done 
efficiently and the randomized problem should not make the randmoized problem
much harder than the original
problem.

There are existing researches for privacy
 preserving data mining \cite{Verykios_04v,Clifton03,Agrawal00} 
and privacy preserving Linear Programming (LP) outsourcing \cite{WRW11}. 
However, Outsourcing 
the Satisfiability (SAT) problem is  also very important. 
SAT is one of the most fundamental problems in computer science and it
has broad  applications.
For example, SAT has important applications 
in circuit verification, software verification, task scheduling, etc \cite{MS08}.
SAT outsourcing is also very different from
LP outsourcing. LP can be solved 
in polynomial time and the algorithms of LP are mature and well known. 
One needs to outsource LP only if one has a very large instance. The hardness of 
LP comes from the size of the problem while the hardness of SAT is intrinsic 
in the problem itself, not only in the size of the problem. Hence the customers 
have strong motivation to outsource SAT  and the economic incentive for 
providing competitive SAT solvers is obvious. 
SAT is especially important in network configuration verification and planning. 
The complexity of network 
configuration verification and planning increases dramatically when the size of the network
 and the number of configuration
rules increase. It is reasonable for system administrators to outsource complicated
 network configuration verification and planning in the format of SAT. However, the SAT problems arising
 from network configuration and planning contain the configuration 
information that the system administrators 
do not want to leak. The solutions to the SAT problems may also contain 
the vulnerabilities or other sensitive system information. In this case 
security is the first concern to outsource SAT based configuration verification 
and planning. In some applications, multiple enterprise networks may need 
to carry out some computational tasks collaboratively. They need to verify that
the individual configurations will work for the  collaborative tasks. However, every
individual network is owned by a separate owner, and the owners may only want 
to reveal the interface information but not the internal information of their networks.
In this case the individual networks may randomize the configuration information and 
send the randomized configuration to a third party to verify the overall configuration
satisfies some global constraints.

In our proposed approach,
the steps for SAT outsourcing are as follows
\begin{enumerate}
\item The
user randomizes the SAT instance that he/she wants to outsource, using the
randomization tool and
sends it to the service provider.  
\item The service provider uses his/her algorithm to solve the randomized instance
 and returns the
solution. If the instance has no solution (unsatisfiable)
or the provider fails to solve it in some amount of time,  the provider should provide
the proof for the unsatisfiability or the proof that it really did the claimed amount of work.
 \item The user derandomizes the returned solution using
the derandomization tool and obtains the true solution
to the original problem. 
\item The user will validate the solution  returned by the service provider.

\end{enumerate}

  We should have an algorithm
to randomize SAT instances with the following requirements: \textit{first}, both the 
original and randomized instance must have the same satisfiability. \textit{Second}, 
any solution of the randomized instance can be efficiently
converted to the corresponding solution of the original 
instance. \textit{Third}, it should be computational hard for the
service provider to retrieve the original instance 
from the randomized instance. For the SAT instances
arising from configuration
verification and planning, the user may only
need to hide some of the statistic information of the
original instance, then we can relax this requirement that
it is  computational hard for the
service provider to retrieve these  statistic  properties. \textit{Fourth}, in 
some cases the user may also need
to randomize the relationship among the solutions
of the original instance. In this case we require 
that it is computational hard for the
service provider to figure out the relationship among 
the solutions of the original instance from the solutions of the
randomized instance except the number of solutions. 
For example, if the original instance has two solutions
$(0,0)$ and $(1,1)$, then the two solutions of the randomized instance
should not be complement to each other.   

These requirements will assure privacy preservation for the outsourced randomized SAT 
instances and it will also encourage users to outsource their SAT instances 
and benefit from third party facilities.  
 
The main objective of this project is to provide the randomization/derandomization tool
for the client who want to outsource SAT-based verification. We provide multiple
randomization algorithms and the user can choose an appropriate 

The most straightforward method to randomize the 
SAT instance is to permute the index of the variables or flip the
true/false  of the appearance of the variables. It is not trivial
for the provider to differentiate two isomorphic
SAT instances since it is not known if there exists a polynomial time algorithm
 for graph isomorphism \cite{Schoning88}. However, merely permute
the index of the variables or flip the variables' truth/false appearance  cannot change
the relationships among the solutions and the statistic 
properties of the instance. 
 The work in \cite{Dimitriou22} is a general  privacy-preserving obfuscation for outsourcing SAT formulas but its performance is not shown for network security related problems such
 as firewall analytics.

It can also be shown that there
is much space for improvement for current
SAT solvers.    Even a relatively small instance with thousands of variables
may be beyond the ability of the best SAT solvers 
today. Table~\ref{tab:3sat1} shows the time to solve a random 
instance with $n$ variables and $m$ clauses 
with zChaff \cite{zchaff} SAT solver. We can see that 
the time to solve a 3SAT instance increases dramatically when 
So we can see that current SAT solvers
are not efficient enough for many applications.  
There is enough motivation for users to outsource
SAT based computation, and for the cloud service providers
to develop competent SAT solvers or applications
that contain SAT solvers.
 
\begin{table*}[ht]
\begin{center}
\begin{tabular}{| l | l | l|l|l|l|}

 \hline
  n&m & Time to solve (s) & n&m & Time to solve (s) \\
  \hline
  200 & 900 & 0.91  & 400 & 1500 &  $<0.01$  \\\hline
   300 & 1200 & $<0.01$ & 400 & 1600 &  1.39  \\ \hline
  300 & 1250 & 0.12 & 400 & 1650 &  1414  \\ \hline
   300 & 1300 & 395 & 400 & 1700 &  16337  \\ \hline

\end{tabular}
\end{center}
\caption{Time to solve the 3SAT instance}
\label{tab:3sat1}
\end{table*}

Our contributions in this paper come in presenting several methods to randomize SAT instances as follows: \textit{first}, 
a method to randomize some statistical properties of a SAT instance
by noise injection. \textit{Second}, a method to randomize the whole structure of a SAT instance. \textit{Third}, a method to randomize a solution set. \textit{Fourth}, methods to randomize Mincost SAT and MAX3SAT.
We also study an important practical example of outsourcing SAT based configuration verification,
that is firewall equivalence verification. To the best of our knowledge, this is
the first work to investigate privacy preserving in SAT outsourcing. 

The rest of the paper is organized as follows.
Section~\ref{sec:adv} discusses the computation model,
adversary model and requirements for SAT outsourcing.
Section~\ref{sec:ran} describes the 
methods to randomize SAT instances before outsourcing.
 Section~\ref{sec:case} presents  the
case study of firewall equivalence verification.
Section~\ref{sec:eva} shows  the
evaluation results. 
Section~\ref{sec:related} presents  the
related works.
Section~\ref{sec:legal} discusses the legal implications of 
SAT outsourcing and
section~\ref{sec:conclusion}
concludes the paper and presents directions for future work.

\section{Computational Model, Adversary Model and Requirements for SAT Outsourcing}
\label{sec:adv}

\subsection{Computational Model}
In the computation model of SAT outsourcing, there are two participants. The first
participant is the user, who wants to outsource his/her SAT problem. The 
second participant is the cloud service provider.
The steps of SAT outsourcing are as follows:

\begin{enumerate}
\item The
user randomizes the SAT instance that he/she wants to outsource and
sends it to the service provider.  
\item The service provider uses his/her algorithm to solve the randomized instance
 and returns the
solution.
\item The user derandomizes the returned solution and obtains the solution
to the original problem. 
\end{enumerate}
 
\subsection{Adversary Model}
In our discussion of this paper, we consider three types of service providers:
\begin{itemize}
\item \textit{Honest providers.} Honest providers always report the answer from
an honest execution of their SAT algorithm.

\item \textit{Lazy providers.} Lazy providers may report ``fail" for
an instance without executing their SAT algorithm. Since
the user also needs to pay for the provider if the
user cannot present  evidence for cheating behavior of the provider, the
provider can benefit from lying. 

\item \textit{Malicious providers.} A malicious
provider may have two kinds of malicious behaviors. He may try to figure out the 
original instance from the randomized instance or he may also report ``unsatisfiable"
even if the instance is satisfiable. To do this, the malicious provider may provide
a wrong unsatisfiable core for the user, or cheat in replying
the user's questions about the instance during
the interactive or non-interactive proof procedure for the
unsatisfiability of the instance. The malicious provider may
use the solution of the SAT instance  to launch attacks
or provide the solution to third parties.

\end{itemize}

One needs to detect malicious providers and lazy providers for outsourcing SAT
 instances. We should have an algorithm
to randomize SAT instances with the following requirements: \textit{first}, both the 
original and the randomized instance must have the same satisfiability. \textit{Second}, 
any solution to the randomized instance can be efficiently
converted the a corresponding solution to the original instance. \textit{Third}, it
 should be computationally hard for the
service provider to retrieve the original instance 
from the randomized instance. For the SAT instances
arising from configuration analysis and
verification, the user may only
need to hide some of the statistical information of the
original instance, then we can relax this requirement that
it is  computationally hard for the
service provider to retrieve these statistical  properties. \textit{Fourth}, in some cases the user may also need
to randomize the relationship among the solutions
of the original instance. In this case we require 
that it is computationally hard for the
service provider to figure out the relationship among 
the solutions of the original instance from the solutions of the
randomized instance except the number of solutions. 
For example, if the original instance has two solutions
$(0,0)$ and $(1,1)$, then the two solutions of the randomized instance
should not be complement to each other.   

The above requirements will assure privacy preservation for the outsourced randomized SAT 
instances and it will also encourage users to outsource their SAT instances 
and benefit from third party facilities.   

\subsection{Classification of Outsourcing Security}
Informally, we say that a user or client $C$ securely outsources some work to
cloud service provider $S$, and $(C,S)$ is an
outsource-secure implementation of a cryptographic algorithm Alg if (1) $C$ and
$S$ implement $Alg$, such that $Alg = C^S$ and (2) $S$
 cannot learn the sensitive information about the input and
output of the computation. 

 In the following, we introduce the formal definitions for secure
outsourcing. We adapt the  definition from \cite{CLMTL12}, with some modifications.

\begin{definition} \textbf{Full Outsourcing-Security}
Let $Alg$ be an algorithm with outsource
input/output. A pair of algorithms $(C,S)$ is said to be an outsource-secure implementation
of $Alg$ if:
1. Correctness:  $C^S$ is a correct implementation of $Alg$.
2. Security: For all probabilistic polynomial-time adversaries $A = (E,S')$, where $E$ is the
adversarial environment that submits adversarially chosen inputs to $Alg$,  there
exist probabilistic expected polynomial-time simulators $(S_1, S_2)$ such that the
random variables obtained from the view of the
input/output of $Alg$ and the view from the
 execution of the simulators are computationally indistinguishable.
\end{definition}

Note that this is the strongest form of security, which means the adversary can learn
nothing from the input/output of the algorithm. 

If the client only cares about the privacy of the original instance but not the
the privacy of the solution, the definition can be modified to be 
that the view from the input of the algorithm is computationally indistinguishable
from the view of any other input which has the same set of solutions.

 \begin{definition} (\textbf{Instance-privacy Outsourcing})
A pair of algorithms $(C,S)$
is said to be an instance-privacy outsourcing  of $Alg$ if (1) 
$C^S$ is a correct implementation
of $Alg$ and (2) $\forall$ inputs $x$ is computationally indistinguishable
from the view of any other input $x'$ which has the same set of solutions as $x$.
\end{definition}

\section{Randomizing SAT Instances} 
\label{sec:ran}
In this section we present  methods that can be used to randomize SAT instances and prepare them for outsourcing.
\subsection{Permutation of Variables and Negation Flipping}
The most straightforward method to randomize a  
SAT instance is to permute the index of the variables or flip 
true/false  values of the variables. It is not trivial
for the provider to differentiate between two isomorphic
SAT instances; since it is not known if there exists a polynomial time algorithm
 for graph isomorphism \cite{Schoning88}. However, merely permuting
the index of the variables or flip the variables' true/false values cannot change
the relationships among the solutions and the statistical 
properties of the instance.

\subsection{Matrix Multiplication Randomization}

The noise injection method for SAT randomization can only
hide some of the statistical properties of the original
instance. If we want to completely randomize all  
information of the original instance except the solution set,
 we can use a more complicated method called
matrix multiplication randomization. The method
has significant overhead.
If the requirement of privacy preservation is high, the user
may choose this method. 

Here we only consider 3SAT problem, since every SAT instance
can be easily converted to a 3SAT instance.

The following discussion shows how to convert 
a 3SAT instance into a matrix form and how to randomize the generated matrix. We can first convert the
3SAT instance to an equation array of 0/1 linear constraints.
Inequalities can be converted to equalities by adding dummy variables.
After this procedure we can multiply a random 0/1 matrix in both sides of the equation array and
now the problem is converted to a 0/1 linear constraint satisfaction problem. 
Any solution to the new linear integer programming instance will be
a valid solution for the original SAT instance. 
 
The detailed steps are as follows:

\begin{enumerate}

\item \textit{Convert to equation}: For 
every  variable $x_i$ in the original 3SAT, create a corresponding variable
$y_i$ in the
  created 0/1 linear constraint satisfaction instance. For every clause in the 3SAT
instance, convert it to an equation with two dummy variables. Suppose the
original clause is $x'_{i1} \lor x'_{i2} \lor x'_{i3}$, where 
$x'_{ij}$ may be variable $x_{ij}$ or its  negation form $\overline{x_{ij}}$,
 ($1\leq j \leq 3$), then we
 convert it to the following equation:
\begin{equation} 
\label{equ:convert}
y'_{i1} + y'_{i2} + y'_{i3} + y_{d1} +  y_{d2} = 3.
\end{equation}

Here $y'_{ij}$ is $y_{ij}$ if $x'_{ij}$ is $x_{ij}$, and is $(1- y_{ij})$ if $x'_{ij}$ is
$\overline{x_{ij}}$ ($1\leq j \leq 3$). There are the two dummy variables $y_{d1}$ and $y_{d2}$ 
for the clause.
   Now we get the equation set $AX = B$ where $A$ is an $m$ by $n+ 2m$ matrix. Here
  $m$ is number of clauses, $n$ is number  of variables in the original 3SAT instance.

\item \textit{Random matrix multiplication}: Generate a random $m$ by $m$ 0/1 matrix $R$ with full rank,
 and    multiply $R$ to both sides of equation array. Now we have $RAX= RB$.
Note that here $RA$ is still a   $m$ by $n+ 2m$ matrix, $RB$ is a $m$ by 1 matrix.

\item \textit{Outsource the problem}: Send this 
equation array to the service provider, and ask it to solve the
equation array as a 0/1 linear constraint satisfaction problem.

\end{enumerate}

The following example shows the details of converting a 3SAT instance to a matrix representation. Consider the following 3SAT instance: \[ (x_1 \lor x_2 \lor x_3) \land  ( \overline{x_1} \lor x_2 \lor \overline{x_3}). \]
We need to add 4 dummy variables $x_4,x_5,x_6,x_7$, two for each clause. The 
converted 0/1 linear  constraint satisfaction instance is:
 \[  x_1+x_2+x_3 +x_4 + x_5 = 3 \]
\[  - x_1  +x_2 -x_3 +x_6 + x_7 = 1. \]
We can see that any solution of the original 3SAT instance can also be converted to a solution
for the new 0/1 linear constraint satisfaction instance if we set the Boolean \textit{true} value to the integer value
1 and the Boolean \textit{false} value to the integer value 0, and set the values of the dummy variables as follows:
 \begin{itemize}
\item If the clause is satisfied by all three of the literals, set the
two dummy variables correspond to the clause to  0.

\item If the clause is satisfied by exactly two of the three of the literals, set 
one of the two dummy variables correspond to the clause to  0, another one to 1.
  
\item If the clause is satisfied by exactly one of the three of the literals, set both
  dummy variables correspond to the clause to   1.
\end{itemize}

Next we prove that any solution to the original 3SAT instance
can be converted to a solution to the new problem. Conversely, any 
solution to the new problem instance can also be converted 
to a solution to the original problem.

\begin{theorem}
Any solution to the randomized 0/1 linear constraint satisfaction instance 
can be converted to a solution in the original SAT instance 
if we set the integer value 1 to Boolean true    and 0 to  false.
Any solution of the original SAT instance also
corresponds to a solution in the randomized 0/1 linear constraint satisfaction  instance.  
\end{theorem}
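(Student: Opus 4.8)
The plan is to prove the claimed correspondence in two stages: first relate the original 3SAT instance to the intermediate system $AX = B$ over Boolean (0/1) assignments, and then relate $AX = B$ to the randomized system $RAX = RB$ through the invertibility of $R$. Chaining the two stages yields both implications in the statement.

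For the first stage I would argue clause by clause. Fix a clause with literals $x'_{i1}, x'_{i2}, x'_{i3}$ and its equation $y'_{i1} + y'_{i2} + y'_{i3} + y_{d1} + y_{d2} = 3$ from \eqref{equ:convert}, where each $y'_{ij} \in \{y_{ij}, 1 - y_{ij}\}$ equals $1$ exactly when the corresponding literal is true. For the direction from 3SAT to $AX=B$, take any satisfying assignment, note that the number $k$ of true literals lies in $\{1,2,3\}$, and check that the dummy-variable rules already listed in the example force $y_{d1} + y_{d2} = 3 - k \in \{0,1,2\}$, so the equation holds; doing this simultaneously for every clause extends the assignment to a 0/1 solution of $AX = B$. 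For the converse, given any 0/1 solution of $AX = B$, observe that $y_{d1} + y_{d2} \le 2$ forces $y'_{i1} + y'_{i2} + y'_{i3} \ge 1$ in each clause equation, i.e. at least one literal is true, so every clause is satisfied and the restriction to the original variables (reading $1$ as true, $0$ as false) is a satisfying assignment.

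For the second stage, the key fact is that $R$ is a full-rank $m \times m$ matrix, hence invertible. Then for \emph{any} vector $X$ one has $RAX = RB \iff R(AX - B) = 0 \iff AX - B = 0 \iff AX = B$, where the middle equivalence uses $\ker R = \{0\}$. In particular the 0/1 vectors satisfying $RAX = RB$ are exactly the 0/1 vectors satisfying $AX = B$; note that the randomization never alters $X$ itself, only the equations, so Booleanness is preserved automatically. Composing with the first stage gives both directions of the theorem.

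The main obstacle to watch is this second stage: multiplying both sides of a linear system by a matrix can in general enlarge or shrink its solution set, and it is precisely the full-rank (invertibility) hypothesis on $R$ that guarantees neither spurious 0/1 solutions are introduced nor valid ones lost. I would make explicit that ``full rank'' must be read over the rationals (equivalently $\det R \neq 0$ as an integer matrix), since invertibility over $\mathrm{GF}(2)$ alone would not justify the equivalence $RAX = RB \iff AX = B$ for the integer-valued evaluations used here. The first stage, by contrast, is routine once the dummy-variable accounting is phrased in terms of the count $k$ of true literals per clause.
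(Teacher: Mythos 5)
Your proof is correct and follows essentially the same route as the paper's: the same clause-by-clause dummy-variable accounting (setting $3-k$ dummies to $1$ when $k$ literals are true, and conversely using $y_{d1}+y_{d2}\le 2$ to force at least one true literal) combined with the invertibility of $R$ to identify the solution sets of $AX=B$ and $RAX=RB$. One small correction to your closing caveat: for a 0/1 matrix, invertibility over $\mathrm{GF}(2)$ already forces $\det R$ to be odd and hence nonzero over the rationals, so either reading of ``full rank'' suffices for the integer-valued equivalence.
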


\begin{proof}
For any solution to the original instance, the  assignment
of the variables will satisfy any clause. That means for a clause
$(x_{i1} \lor x_{i2} \lor x_{i3})$, one of the $x_{ij}$ ( $ 1\leq j \leq 3$)
must be true. If $k$ ($ 1\leq k \leq 3$)  variables
in the clause are satisfied, we can set $3-k$ dummy variables
correspond to the clause to be 1 in   Equation~\ref{equ:convert}.
This means that we have a solution for equation set $AX=B$, consequently,  
we also have a solution for   equation set $RAX=RB$.
On the other hand, any solution for $RAX=RB$ is also a solution
for $AX=B$ because $R$ is invertible. Then for every clause,
one of the $y'_{ij}$ ($ 1\leq j \leq 3$) in   Equation~\ref{equ:convert} must be one, which means
the corresponding clause in the original SAT instance can be satisfied.

\end{proof}

\begin{theorem}
The randomized matrix $RA$ can be any matrix that satisfies
the same column vector linear relationship as that of matrix $A$. This means
the outsourcing method keeps the instance privacy.
\end{theorem}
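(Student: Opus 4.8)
I would read the statement as having two halves: an exact algebraic characterization of the randomized coefficient matrix, and the instance-privacy conclusion drawn from it. First I would make the phrase ``same column vector linear relationship'' precise. Writing $A=[a_1\mid\cdots\mid a_N]$ with $N=n+2m$, a linear relation among the columns is a vector $c$ with $\sum_j c_j a_j=0$, i.e. $Ac=0$; so the set of all column relations of $A$ is exactly the right null space $\ker A$. Over a field $\ker A$ is the orthogonal complement of the row space, hence two $m\times N$ matrices have the same column relations iff they have the same null space iff they have the same row space. The goal then reduces to the set equality $\{\, RA : R \mbox{ invertible}\,\}=\{\, A' : A' \mbox{ is } m\times N \mbox{ with } \ker A'=\ker A\,\}$, after which I would transfer this to the instance-privacy definition of Section~\ref{sec:adv}.

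For the easy inclusion (necessity) I would argue directly that every $RA$ keeps the column relations of $A$: since $R$ is invertible, $(RA)c=R(Ac)$, so $(RA)c=0$ exactly when $Ac=0$, giving $\ker(RA)=\ker A$. This is essentially the invertibility fact already used in the proof of the previous theorem.

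The substantive inclusion (sufficiency) is that \emph{every} $A'$ with $\ker A'=\ker A$ is realized as $A'=RA$ for some invertible $R$. Same null space means same row space, so each row of $A'$ lies in the row space of $A$ and can be written as a combination of its rows, i.e. $A'=RA$ for some $m\times m$ matrix $R$. The key structural input is that $A$ has \emph{full row rank} $m$: by construction each clause contributes two dummy variables that occur in no other equation, so selecting one dummy column per clause yields an $m\times m$ permutation submatrix, forcing $\mathrm{rank}\,A=m$. Since $A'$ then also has rank $m$ and $\mathrm{rank}(RA)\le\mathrm{rank}(R)$, the matrix $R$ must have rank $m$ and hence be invertible. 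I expect this full-rank bookkeeping — verifying it survives the dummy-variable encoding of Equation~\ref{equ:convert} — to be the main technical step of the algebraic half.

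Finally I would deduce privacy. The provider's view is the pair $(RA,RB)$, equivalently the randomized augmented matrix $R[A\mid B]$. If a second instance has augmented matrix $R_0[A\mid B]$ for a fixed invertible $R_0$, then drawing $R$ at random gives $R R_0[A\mid B]$, and since $R\mapsto RR_0$ is a measure-preserving bijection of the invertible matrices the induced distribution of the view is identical for the two instances; they are thus perfectly, hence computationally, indistinguishable, matching the instance-privacy requirement. The genuine obstacle, which I would flag explicitly, is twofold: first, the construction samples $R$ as a \emph{0/1} full-rank matrix, whereas the unique $R$ realizing $A'=RA$ is generally not $0/1$, so both the literal ``any matrix'' claim and the bijection argument really hold for $R$ ranging over all invertible matrices over the rationals and must be relaxed for the $0/1$ sampling actually used; second, the instance-privacy definition quantifies over all inputs with the \emph{same solution set}, which is a coarser equivalence than ``same column relations,'' so I would either argue that this encoding makes the two notions coincide or else restrict the stated privacy guarantee to the column-relation class that the randomization genuinely hides.
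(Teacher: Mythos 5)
Your algebraic core is the same as the paper's: both arguments rest on the fact that the per-clause dummy variables give $A$ full row rank $m$, and both conclude that the reachable set $\{RA\}$ is exactly the set of matrices sharing $A$'s column dependencies. The difference is in how the substantive inclusion is realized. The paper is constructive: it collects $m$ linearly independent columns of $A$ into an invertible submatrix $A_1$ and, for any target full-rank $A_2$, sets $R=A_2A_1^{-1}$, so that $RA$ carries $A_2$ in those column positions and the forced linear combinations elsewhere. You instead go through null-space/row-space duality: same column relations means same null space, hence same row space, hence $A'=RA$ for some $R$, and rank counting forces $R$ invertible. The two routes are equivalent, but yours makes the undefined phrase ``same column vector linear relationship'' precise and also supplies the easy direction $\ker(RA)=\ker A$, which the paper leaves implicit. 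Where you genuinely go beyond the paper is the privacy half: the paper simply asserts ``this means the outsourcing method keeps the instance privacy,'' whereas you derive indistinguishability of views from the measure-preserving bijection $R\mapsto RR_0$ on the group of invertible matrices. Your two caveats are both real defects of the claim as the paper states it: the construction samples $R$ as a random full-rank \emph{0/1} matrix, yet the $R=A_2A_1^{-1}$ realizing an arbitrary admissible target is generally not 0/1, so the ``any matrix'' claim and the uniformity argument hold only when $R$ ranges over all invertible matrices; and the instance-privacy definition in Section~\ref{sec:adv} quantifies over inputs with the same \emph{solution set}, a strictly coarser equivalence than ``same column relations,'' so the theorem as proved does not literally discharge that definition. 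The paper addresses neither point, so flagging them is a net improvement rather than a gap in your argument.
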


\begin{proof}
After adding the dummy variables, $A$ will have rank $m$. So the number of
linear independent column vectors in $A$ is $m$. If we 
take the $m$ by $m$ matrix $A_1$ that contains the $m$  linear independent column vectors
of $A$, then for any full rank $m$ by $m$ matrix 
$A_2$, we can choose matrix $R=A_2A_1^{-1}$. Now we can see that
$RA$ will contain all column vectors of $A_2$. Other
columns of $RA$ will be linear combination of column vectors of $A_2$. 
This shows that the randomized matrix $RA$ can be any matrix that satisfies
the same column vector linear relationship as that of matrix $A$.
\end{proof}

 The properties of this technique are: \textit{first}, the transformation can be done efficiently. Here we need
only the matrix multiplication in the transformation. \textit{Second}, the old problem and the new 
problem have similar hardness in theory.
 Since any solution to the new problem is also a 
solution to the old
instance, many existing search based algorithms that work for SAT
 will also work for the 0/1 linear constraint satisfaction problem.
\textit{Third}, this technique provides  a 
complete randomization of the structure of the original
instance. The choice of $R$ can be arbitrary,
so there is no way to recover the original instance without knowing $R$.

\subsection{Solution Set Randomization}
\label{sec:var_random}
The randomization method in the previous section converts the original SAT instance
to a randomized instance with the same solution set. In some
scenarios, one may require to randomize the solution set of the SAT instance. Here we consider 
a method to randomize the solution set  of the original SAT. 

Suppose there are $n$ variables $x_1, \ldots, x_n$ in the original SAT instance. We
 set $X=[x_1,\ldots,x_n]^T$. We can generate a random full rank $n$ by $n$ 0/1 matrix
$R$, and define new variable vector 
  \[ Y=[y_1, \ldots, y_n]^T=RX. \]

Note that matrix multiplication is done in finite field $F_2$.   
Now we have $X=R^{-1}Y$, which means that every $x_i$ ($1\leq i \leq n$) is the exclusive or
of some of $y_i$s. For every clause $x_{i1}\lor x_{i2}\lor x_{i3}$, we
can replace $x_{ij}$ ($1\leq j \leq 3$) with the corresponding $y_i$s, and 
convert the new Boolean formula to standard 3CNF formula.
As an example, suppose $x_{i1}=y_1 \oplus y_2 $, $x_{i2}=y_3 \oplus y_4$,
$x_{i3}=y_5 \oplus y_6$, then $x_{i1}\lor x_{i2}\lor x_{i3}$ is equivalent
to
\begin{eqnarray}
   \nonumber (y_1 \land \overline{y_2}  ) \lor (\overline{y_1} \land y_2 ) \lor  
    (y_3 \land \overline{y_4} ) \\ \nonumber \lor (\overline{y_3} \land  y_4) \lor (y_5 \land \overline{y_6}  )
           \lor (\overline{y_5} \land  y_6),  
\end{eqnarray}

which can be converted the following CNF
\[  (\displaystyle\bigvee_{1 \leq i \leq 6}z_i )\bigwedge 
   \left(\displaystyle\bigwedge_{1 \leq i \leq 6}Z_i \right), \]

where 
 \begin{eqnarray}
 \nonumber
 Z_i=  (\overline{z_i} \lor y_i ) \land
   (\overline{z_i} \lor \overline{y_{i+1}} ) 
   \land (z_i  \lor \overline{y_{i}} \lor y_{i+1}) 
\end{eqnarray}
when $i=1,3,5$, and 
\begin{eqnarray}
\nonumber
 Z_i=  (\overline{z_i} \lor y_i ) \land
   (\overline{z_i} \lor \overline{y_{i-1}} ) 
   \land (z_i  \lor \overline{y_{i}} \lor y_{i-1}) 
\end{eqnarray}
when $i=2,4,6$.

Here $z_i$ ($1 \leq i \leq 6$) are dummy variables.
 
For variables that are exclusive or of more than two old variables, we can also add dummy
variables to convert it to 3CNF. For example,
 $x_{i}=y_1 \oplus y_2 \oplus y_3 $ can be converted to
\begin{eqnarray}
\nonumber
 x_i= (z \lor y_1 \lor y_2 ) \land (z \lor \overline{y_1} \lor \overline{y_2} ) \\
\nonumber
    \land ( \overline{z} \lor \overline{y_1} \lor y_2 ) \land (\overline{z} \lor y_1 \lor \overline{y_2} )  \\
\nonumber
  \land (z \lor \overline{y_3} ) \land (y_3 \lor \overline{z} ), 
\nonumber
\end{eqnarray}
 where $z$ is a dummy variable.

In this way we can convert the SAT instance to a new SAT instance, and the solutions
to the old instance can be recovered from the solutions to the new instance
by  $X=R^{-1}Y$.  The relationship among the solutions  in the original SAT instance
 will be randomized. 
The shortcoming of this approach is that the number of clauses in the new instance
 will increase with a factor of the number of variables. To reduce the complexity, we can
 use a sparse matrix $R$ in the randomization.
If the number of non-zero entries in $R$ is linear in $n$, then the number of
clauses will be linear in $n$.
After one randomizes the solution set, one can use the noise injection
method or the matrix multiplication method to randomize
the instance.

\subsection{Randomizing Mincost SAT}
\label{sec:mincost}
Mincost SAT \cite{Papadimitriou93} is an important
variant of SAT.
We can use the  noise injection or the matrix multiplication method to 
randomize any  Mincost SAT instance since the cost of a solution
for the randomized instance is also the cost of the corresponding solution for
the original instance. Since the user needs to set the cost of
all dummy variables to be 0, he/she may reveal the 
dummy variables when the cost function is provided to the service provider.
To deal with this situation, we can convert the
cost function to a Boolean circuit with variables
in the original SAT instance. Suppose in the original
instance there
are $n$ variables $x_1, \ldots, x_n$, and the 
cost function is 
\[ C = c_1x_1 + c_2x_2 + \ldots + c_nx_n, \] 
where $c_i$ ($1 \leq i \leq n$) is the cost of variable $x_i$ and  $x_i$
 takes 0/1 values. 
The Boolean circuit $C_1$ have $O(n\beta)$ gates, $O(n\beta)$ variables,
and $\beta$ output bits $b_1, \ldots, b_{\beta}$,
where $\beta$ is the number of bits in the representation of $c_i$  values
and $b_1$ is the most significant bit.

We can convert $C_1$ to a  CNF $C_2$ and generate a CNF $C_3$ which 
 combines $C_2$  with the original CNF, and set the
new cost function to be
\begin{equation}
\label{eqn:cost}
 C' = 2^{\beta}b_1 + 2^{\beta-1}b_2 + \ldots + b_{\beta}.
\end{equation}
Now we can randomize $C_3$ and  
the provider cannot distinguish the dummy variables and non-dummy variables anymore.

\subsection{Randomizing MAX3SAT}
In this section we present a method to randomize MAX3SAT instances. 
 For every clause $x_{i1} \lor x_{i2} \lor x_{i3}$
 in the original SAT instance, the user can
create a new variable $y_i$ and the following formulas 
\begin{equation}
\label{eqn:clause}
 y_i \Leftrightarrow (x_{i1} \lor x_{i2} \lor x_{i3}), \,\, 1\leq i \leq m,
\end{equation}

where $m$ is the number of clauses.
Next the user can combine all these formulas together with the original
CNF to get a new Boolean formula $C_1$, and  convert $C_1$
to an equivalent CNF  $C_2$.
The objective  of the original  instance becomes

 \textit{Maximize} $y_1 + y_2 + \ldots + y_n$,

 where $y_i$ ($1\leq i \leq n$) takes 0/1 values.

Now the user can convert the original problem to a new Mincost SAT problem. The
CNF in the new problem is  $C_2$, 
and the new objective function is: 

\textit{Minimize} $- y_1 - y_2 - \ldots - y_n$.

The user can also use the  noise injection or the matrix multiplication method to 
randomize the CNF and use the method in Sec.~\ref{sec:mincost} to hide the
dummy variables.

\subsection{Verification of the Correctness of the Result}
For any SAT instance, the provider may return three types of results. The first type of results
is one or multiple correct solutions. The second is \textit{``unsatisfiable"} with
the related unsatisfiable core (or proof of the unsatisfiable core), and the
third is \textit{``fail"}; this happens when the provider cannot determine the
satisfiability of the instance in a specified amount of time. 
When the provider returns one or multiple solutions, the user can
verify the results easily. 
The latter two cases are 
difficult to verify.   
In computational theory, 
it is  believed that 
the complexity class $Co\mbox{--}NP$ is unlikely to be in class $NP$.
So one cannot provide a  polynomial time verifiable proof
for unsatisfiable SAT instances. For some instances, the user can guarantee 
that the instance is satisfiable. For example, if the
user want to outsource integer factoring or discrete logorithm
by converting the problems to SAT, he/she knows
that some solution must exist, and the conversion is simple (to convert
factoring to SAT, one just needs to convert the multiplication circuit
to SAT, which can be done in $O(n^2)$ time, where $n$ is the number of bits
of the integer. Discrete logorithm can also be converted to SAT efficiently).
For these SAT instances, the service provider cannot cheat with the result \textit{``unsatisfiable"}.

It is also unlikely to design practical interactive or non-interactive proofs
for unsatisfiable SAT instances. 
By Shamir's theorem \cite{Shamir92}, any problem in $PSPACE$
can be verified with interactive proofs in polynomial time. However, the proof
of Shamir's theorem
works only from a theoretical perspective because it assumes that the prover
has infinite computational power, which is not practical for existing
service providers. 
The existing techniques used
to defeat service provider cheating cannot be applied for
verification of unsatisfiable SAT instances. The 
method presented in \cite{grid} uses Merkle  hash 
tree commitment for computation verification \cite{Merkle89, Merkle80}, 
and the work in \cite{GM01} combines some pre-computed results
 with the computation workload to detect lazy providers.
Both of them can only  check the cheating behavior
in a non-negligible portion of all possible computation branches,
but in the verification of unsatisfiable SAT instances one needs
 to verify the correctness of every possible computation branch.

In the case that the provider reports ``fail" 
and the user wants to verify that 
the provider  has really spent the claimed amount of time on the
problem, 
the provider can build the Merkle hash tree for the computational
procedure (such as the searched branches) and use the similar method in \cite{grid} to verify
the correctness of the tree.
The verification
takes $O(logn')$ time in communication and $O(logn')$ computation
overhead for the user where $n'$ is the size of the tree.
One problem with this approach
is that the user may obtain the details of the algorithm
from the verification procedure,
and the algorithm may be the secret of the provider.

For some unsatisfiable SAT instances 
 the provider may
find the unsatisfiable core \cite{GN03}. The provider can send back this core
along with  the proof of the core
to the user and the user can verify the 
correctness of the core. However the verification
of the core may be beyond the computational power of the user. 
In this case, the customer  can 
 outsource different randomized versions of the 
core to several other service providers. If all these
providers answers ``unsatisfiable" or ``fail"  for  the  unsatisfiable core, 
the user accepts the result. As long as one provider returns
a solution for the  unsatisfiable core, the first provider is caught with cheating
and will lose credibility.

To avoid the case when the SAT is solvable but the
provider simply report ``fail", one can also send  different randomized
versions of the original to different service providers.  If one
of the provider can find one satisfiable assignment for it, then the user can show
the solution to other providers
that reported ``fail". This solution can be easily
verified. In this case, the user will only need to pay the full charge for the
provider that reports a valid solution. If none of the providers
report a valid solution, it means  the instance is really hard,
and the user will pay full charge to all providers. The providers 
receive different randomized versions of the problem, so they cannot
collude since they cannot determine whether they receive the same original instance or not.
A provider may still succeed in cheating in the case that the problem
happens to be hard and nobody else can solve it. But if
the provider is caught with laziness or incompetency in solving the problem,
he may lose his credibility and future users. The providers have enough motive to
work ``hard" to solve problems since it may get more compensation than
to be ``lazy".

\subsection{Outsourcing Multi-party SAT-based Computation}
In applications that multiple partners jointly execute some tasks, the
multiple partners need to verify that the configurations of their networks
are correct for the joint taks. However, the partners may 
only want to reveal the interface information (the inferface between the
partner's network and other partners), and they may not want to reveal their
internal configuration information. In this case one must find a way to
carry out secure multi-party computation. Existing protocols for secure multi-party
computation are too expensive and not practical for real application.  

Suppose there are $n$ partners and the
 configuration properties that need to be verified can be represented
as a Boolean formula 
\[ P = f(B_{11}, B_{1u_1}, \ldots, B_{n1}, B_{nu_n}) \] 

where $B_{i1}, \ldots, B_{iu_i}$ are the Boolean formulas that only
involve the configuration of network of partners $i$. 

We can use the following procedures to randomize formula $P$ before outsourcing
the verification task:

\begin{itemize}
\item{Every partner $i$ convert $B_{i1}, \ldots, B_{iu_i}$ 
to CNFs and randomize them to $B'_{i1}, \ldots, B'_{iu_i}$.}
\item{Every partner $i$ sends $B'_{i1}, \ldots, B'_{iu_i}$ to 
a third party or a representative selected among them.}
 \item{The partners agree with a public key using some
key generation protocols and send the key
to the third party or the representative.}
\item{The third party or the representative generates
a random matrix using the key as the seed and 
randomize $P$ using the random matrix.}
\item{The third party or the representative sends randomized formula $P'$
to the cloud service providers.}
\end{itemize}

Note that the individual configuration information related to every partner is
randomized at the first step of the above procedure.

\section{Case Study: Firewall Equivalence Checking}
\label{sec:case}

Firewalls are the most important  network access control devices
 that control the traversal of packets
across the boundaries of a secured network based on. 
A firewall policy is a list of ordered filtering rules that define the actions
performed on
matching packets. A rule is composed of filtering fields (also called header
tuples) such as protocol type, source IP address, destination IP address, source
port and destination port, and an action field. Each rule field could be a single value or
range of values. Filtering actions are either to accept, which passes the packet
into or from the secure network, or to deny, which causes the packet to be discarded. The
packet is accepted or denied by a specific rule if the packet header information
matches all the fields of this rule. Otherwise, the following rule is examined and
the process is repeated until a matching rule is found or the default policy action is
performed.
The filtering rules may not be disjoint, thereby packets may match one or
more rules
in the firewall policy. In this case, these rules are said to
be dependent or overlapping and their relative ordering must be preserved
for the firewall policy to operate correctly. 
 
If  two firewalls have large rule sets and
the network administrator want to verify if they
are equivalent, then he/she
may outsource the verification task to some 
service provider.

The most straightforward method is to use random mapping 
to randomize configuration policy rules for outsourcing.
 For every blocks in the  IP, one can generate a mapping
from 0-255 to 0-255. For the port numbers, one can also have a mapping
  from 0-25535 to 0-25535. Note that the mapping should be preserved for all 
rules. For example, consider a firewall policy with two rules 
shown in Table~\ref{tab:rulemap}. Based on the mapping shown 
below, the randomized rules is shown in Table~\ref{tab:mappedrule}.

\begin{table}
\centering
\begin{tabular}{ |c|c|c|c| }
\hline
   src IP  &    src port &    dest IP &       dest port  \\ \hline
  10.11.12.*  &     100 &    10.14.15.* &       80 \\\hline
  152.15.10.*  &    99  &   152.15.*.* &       80\\\hline
\end{tabular}
\caption{The original rules}
\label{tab:rulemap}
\end{table}
	
 \begin{table}
\centering
\begin{tabular}{ |c|c|c|c| }
\hline
   src IP  &    src port &    dest IP &       dest port  \\\hline
  23.170.55.*   &     471 &    23.76.142.*  &       2313 \\\hline
  163.201.97.* &    15717   &   163.201.*.*  &      2313 \\\hline
\end{tabular}
\caption{The randomized rules}
\label{tab:mappedrule}
\end{table}

Suppose for  IP block 1, the random mapping is:
 \[  10 \leftrightarrow 23, 152 \leftrightarrow 163, 100 \leftrightarrow 41 \]
 For  IP block 2, the random mapping is:
 \[ 11 \leftrightarrow 170, 14 \leftrightarrow 76, 15 \leftrightarrow 201  \]
 For  IP block 3, the random mapping is: 
 \[  12 \leftrightarrow  55, 15 \leftrightarrow  142, 10 \leftrightarrow  97  \]
 For port number,  the random mapping is: 
 \[ 100 \leftrightarrow 471, 99 \leftrightarrow 15717, 80 \leftrightarrow 2313  \]

    To randomize the rules in this way, the  IP and port numbers are hidden and
the semantics of the rules can  be maintained.  However the service provider can still get the entropy
information. For example, if port 80 appears frequently in the rules, the mapped
number will also appear frequently. The service provider may deduce the mapping from the
statistics of the field values of rules.  So we can see that this kind of 
naive randomization method is not enough work for user privacy. We need to
seek more sophisticated approach for this problem.

We can use the SAT randomization methods in \S~\ref{sec:ran} to randomize the
firewall rule sets.
To do this, we  need to represent   a firewall as a Boolean formula $F$. 

Suppose a firewall contains $u$ rules $r_1, r_2, \ldots, r_u$,
we can denote the Boolean formula corresponding $r_i$ as $A_i$ ($1 \leq i \leq u$).
For  every single  rule in the firewall, we need 16 bits to represent source port and destination port, 32 bits
to represent source and destination address. In total we need 96 bits $b_1, b_2, \ldots, b_{96}$.
If the action of the rule $r_i$ is accept, then the rule 
can be represented as
\begin{eqnarray} 
 A_i \Leftrightarrow (b_{i1} \land b_{i2} \land \ldots \land b_{ik}),
\label{eqn:accept}
\end{eqnarray}
 
where the bits $b_{i1}\ldots b_{ik}$ are the corresponding bits of the field in the rule.
Here  $k$ is the number of bits needed 
to represent a single rule, which is 96 in this case.

If the action of the rule is deny, then the rule 
can be represented as
\begin{eqnarray}
 A_i \Leftrightarrow (\overline{b_{i1}} \lor \overline{b_{i2}} \lor \ldots \lor \overline{b_{ik}} ). 
\label{eqn:deny}
\end{eqnarray}

 If rule $r_i$ is independent from all other rules, then we can
add it into $F$ as
  \[ F = F \lor A_i \]
if the action of $r_i$ is accept, and
 \[ F = F \land A_i \]
if the action of $r_i$ is deny. All independent rules can be added in this way.   

Next we consider the remaining rules. Without of loss of generality,
we assume the remaining set of rules is $R' =\{r_1, r_2, \ldots, r_u' \}$ ($u'\leq u$),
and the Boolean representation of a single rule $r_i$ ($1 \leq i \leq u'$)
in $R'$ is $A'_i$.

Now the Boolean formula that represents those dependent rules can be represented as
 \begin{eqnarray} F' =  A'_1 \lor (\overline{A'_1} \land A'_2) \lor \ldots \lor 
   ( \displaystyle \bigwedge_{1 \leq i \leq u'-1}\overline{A'_i} \land A'_{u'}).
\label{eqn:whole}
\end{eqnarray}

The whole firewall can be represented
as $F \lor F'$.

Suppose the Boolean formulas that represent two firewalls are $F_1$ and $F_2$,
the non-equivalence of the two firewalls   
is equivalent to the satisfiability 
of the formula
\[ (F_1 \lor F_2) \land ( \overline{F_1} \lor \overline{F_2}).  \]

We can convert this formula to the standard CNF representation \cite{Papadimitriou93}. 
The number of variables and clauses in the standard CNF formula is 
linear in the size of the original formula.
In the worst case,
the total number of variables in the CNF representation 
of the firewall equivalence is at most $O(u^2 + ku)$ and the total number of clauses
is also $O(u^2 + ku)$.   
To randomize the resulting 3CNF, we can use the randomization methods in Sec.~\ref{sec:ran}.


\section{Evaluation}
\label{sec:eva}
We randomly generated SAT instances to evaluate the outsourcing
techniques. Every literal in every clause of the SAT instance 
is chosen uniformly from  the set of variables. All evaluations
are done in a computer with dual core 1.6G Pentium IV processor. 
We used the zChaff \cite{zchaff} SAT solver to solve 
SAT instances  and
 Yices \cite{yices} to solve 0/1 linear constraint satisfaction instances.
 Yices is an SMT (Satisfiability Modulo Theories) \cite{DP60} solver
which can be used to solve  constraint satisfaction problems
in many diverse areas.

{\bf Feasibility of Matrix Multiplication Method (satisfiable instances):}
Table~\ref{tab:sat1} shows the time to randomize the original 3SAT instance, the
time to solve the original 3SAT instance by  zChaff and
 Yices, and the time to solve the randomized 3SAT instance
by Yices. 
$m$ and $n$ are  the number of clauses and variables, respectively. In all the 3SAT
instances in this table, we have  $m/n=3$ or $m/n=4$, where
the instances with  $m/n=4$ is harder than the instances with $m/n=3$
because $m/n=4$ is more close the
phase transition value of 3SAT \cite{phase}.
First we note the performance of 
Yices is much worse than zChaff for 
SAT instances. This is because Yices is not designed
for SAT, and it uses a more complicated data structure 
than zChaff. However we believe this can be improved 
in future SMT solvers (to directly use existing efficient
SAT algorithms when the instance is a pure CNF formula).  
The time to solve the randomized instance
with Yices is also much larger than the time
to solve the original instance in Yices. This is because
the randomization procedure introduces a large 
number of dummy variables and every variable may appear in every
linear constraint. Though the price for
randomization is significant, the matrix multiplication randomization method
can still be applied within practical limits to the cases when the user want absolute
privacy for the original instances. Yices and other existing linear integer programming
tools are not designed specifically to solve 0/1 linear constraint satisfaction
problems. We believe that there is much room
to improve the efficiency to solve 0/1 linear constraint satisfaction
problems in the future. 
\begin{table*}[ht]
\begin{center}
\begin{tabular}{|c|c|c|c|c|c|}

 \hline
  n&m & Randomization time & Time for original(zChaff) & Time for original(Yices)&
  Time for randomized(Yices)\\
  \hline
  100 & 300 &  0.27 & $<0.01$ & $<0.01$ & 1.67  \\\hline
  100 & 400  & 0.8 & $<0.01$ & $<0.01$ &  2.33 \\\hline
  300 & 1000 & 19.74 & $<0.01$   & 0.02 & 14.20  \\ \hline
  300 & 1200 & 34.04 & 1.71 &   21.36 & 8391.28  \\ \hline
  500 & 1500 & 70.15 & $<0.01$   & 0.03 & 31.35   \\\hline
  1000 & 3000 &597.43 & $<0.01$  & 0.06 & 92.78  \\\hline
\end{tabular}
\caption{Matrix multiplication overhead and cost to solve satisfiable instances (seconds)}
\label{tab:sat1}
\end{center}
\end{table*}

\section{Related Works}
\label{sec:related}
The first research for secure outsourcing expensive
computations was Yao's garbled circuits \cite{Yao82}.
Gentry's work on Fully Homomorphic Encryption (FHE) \cite{Gentry09} 
showed that it is possible to achieve secure computation
outsourcing in theory. Gennaro et al. \cite{Gennaro:outsourcing} presented a work to outsource computations to untrusted workers. A fully-homomorphic encryption scheme is used to maintain client's input/outpt privacy.
Atallah et al. in \cite{atallah2005} and \cite{atallah2010} explored a list of work in outsourcing computations. In \cite{atallah2005}, a protocol is designed for outsourcing secure sequence comparison using homomorphic encryption techniques. A secure protocol for outsourcing matrix multiplication was presented in \cite{atallah2010} using secret sharing. The wor in \cite{Horvath20, Yang19, shan18} provide
the survey for cryptographic obfuscation and 
secure outsourced computation.
Garg et al. \cite{Garg16} studies candidate indistinguishability obfuscation and functional encryption for all circuits.    
The work in \cite{Cousins18} implements a non-trivial program obfuscation based on polynomial rings. 


The work in \cite{BPSW07} prosents an efficient protocol for privacy-preserving evaluation
of diagnostic programs, represented as binary decision trees or
branching programs. The main purpose of the protocol
is  to maintain the privacy of both the user data and the server's  diagnostic
program. The protocol needs expensive homomorphic encryption and garbled circuits,
so it cannot be applied in complicated  SAT solving.

The work in \cite{CLMTL12} presents a novel secure outsourcing 
algorithm for exponentiation
modular a prime. 
The randomization methods presented in \cite{WRW11} are secure and practical 
methods to randomize LP instances. Works in \cite{Clifton03,Agrawal00}
investigate techniques for privacy preserving data mining.  
These approaches  only apply to problems with known 
computation procedures.  
The work in \cite{Yasin16} presents a SARLock scheme  to enhance
the circuit lock schemes. However, Shamsi
et al. in \cite{Shamsi17} introduce an new version of the SAT attack to defeat the anti-SAT obfuscation schemes such as
SARlock.
The work in \cite{Qin14} proposes an approach to preserve input and output privacy based on CNF obfuscation, and presents obfuscation algorithm and its corresponding solution. However,
the obfuscated formula can be  attacked as demonstrated in \cite{Dimitriou19}. T. Dimitriou presents 
CENSOR, a privacy-preserving obfuscation
for outsourcing SAT formulas in \cite{Dimitriou22}.
At the core of the CENSOR framework lies a mechanism
that transforms any formula to a random one with
the same number of satisfying assignments. 

Many network configuration verification and planning problems can
be converted to SAT.  
Bera et al. \cite{BGD09} presented a framework that  formulates a QSAT
 (satisfiability of quantified boolean formula) based decision problem 
to verify whether the access control  implementation conforms
 to the global policy both in presence and absence 
of the hidden access paths. ConfigChecker \cite{icnp09} models the entire 
network   using
binary decision diagrams (BDDs) \cite{bryant86}, which are  compressed
form of SAT.

\section{Legal Implications for SAT outsourcing}
\label{sec:legal}
There are some legal issues for SAT outsourcing. The purpose of the work
in this paper is to provide privacy for the customers who want to
outsource SAT problems. This may open the door for outsourcing criminal activities
and make the tracking of  criminal activities difficult.
For example, a customer can easily convert the integer factorization problem
to SAT by converting the integer multiplication circuit to a CNF formula.
Then the customer can randomize it and outsource the SAT solving problem
to cloud servers. Though it is unknown for the performance of solving
 integer factoring through SAT transformation, future progress in SAT 
may  provide feasible solutions.  The computation related to 
integer factoring can be directly used for criminal activities, and 
the cloud servers do not know that they are providing service for 
these  activities since the original problems can be randomized. 
It is also very likely that some cloud servers choose not to 
 publicize  efficient algorithms for SAT since good algorithms 
are profitable.  
Thus it is difficult for authorities to track the service provided
by cloud servers and the SAT instances submitted by customers for
criminal investigation.  
We believe that these issues will be important
research topics for cloud computing.

\section{Conclusion}
\label{sec:conclusion}
Outsourcing computations to cloud servers becomes a necessity due to inherit complexity for most of real world problems. SAT outsourcing is important due to broad applications
of SAT. Privacy preserving and information hiding of the original problem can be achieved by 
randomizing SAT instances. In this paper we discussed the importance of SAT outsourcing and how it can be used to randomize computational problems.  We have presented a method to randomize the whole structure of the SAT instance, a method to randomize solution set, and methods to randomize Mincost SAT and MAX3SAT. The evaluation of the presented methods shows that overhead coming from SAT randomization is within the practical limits and it is applicable as shown in the case study. 
For future work, we plan to 
(1) investigate if there exists any other better randomization method, (2) investigate
 the practicality of our approach 
on other applications, and (3) develop an interactive platform for SAT outsourcing.    

\bibliographystyle{plain}
\bibliography{out}

\end{document}